\theoremstyle{plain}
\newtheorem{theorem}{Theorem}
\theoremstyle{definition}
\newtheorem{definition}{Definition}
\theoremstyle{remark}
\title{Your Title Here}
\author{Your Name(s) Here}
\date{} 
\title{AI Social Responsibility as Reachability:
	Execution-Level Semantics for the Social Responsibility Stack}
\author{Otman Adam Basir\\
	Department of Electrical and Computer Engineering\\
	University of Waterloo, Waterloo\\
	Canada\\
	email:obasir@uwaterloo.ca
	}
\title{AI Social Responsibility as Reachability:
	Execution-Level Semantics for the Social Responsibility Stack}
\begin{document}
		\maketitle
		
	\begin{abstract}
		Artificial intelligence systems are increasingly embedded as persistent,
		closed-loop components within cyber-physical, social, and institutional
		processes. Rather than producing isolated outputs, such systems operate
		continuously under feedback, adaptation, and scale, reshaping physical flows,
		human behavior, and institutional practice over time. In these settings,
		socially unacceptable outcomes rarely arise from singular faults or explicit
		policy violations. Instead, they emerge through cumulative execution trajectories
		enabled by repetition, concurrency, and feedback.
		
		This paper advances the formal foundation of the Social Responsibility Stack
		(SRS) by making its central requirement explicit: responsibility is fundamentally
		a reachability property of system execution. A system is responsible if and only
		if its execution semantics prevent entry into inadmissible global configurations,
		regardless of local performance gains or optimization objectives. Responsibility
		failures are therefore not objective-level errors, but execution-level failures
		of trajectory control.
		
		To operationalize this perspective, we introduce Petri nets as an execution-level
		formalism for responsible autonomous systems. We show how SRS value commitments
		correspond to forbidden markings, safeguards to structural constraints on
		transition firing, auditing to monitoring of reachability pressure, and
		governance to legitimate modification of execution structure. Embedding Petri-net
		reachability within the SRS architecture internalizes responsibility as a
		structural invariant rather than an external objective or post-hoc corrective
		mechanism.
		
		These results establish the Social Responsibility Stack as an executable
		responsibility architecture and position reachability-based execution semantics
		as a necessary foundation for responsible autonomy in feedback-rich
		cyber-physical and socio-technical systems.
	\end{abstract}


	\section{Introduction}
	
	Artificial intelligence systems are no longer deployed as isolated decision
	modules, but increasingly operate as persistent components embedded within
	cyber-physical, social, and institutional processes. They sense, decide, and act
	continuously under feedback, adaptation, and scale, shaping not only immediate
	outcomes but the long-term evolution of the environments in which they operate.
	In such settings, technical execution, human behavior, and institutional
	practice become tightly coupled.
	
	A defining feature of contemporary deployments is that they function as
	\emph{closed-loop systems}. System outputs influence human behavior and
	organizational responses; those responses reshape data, constraints, and
	operating norms; and the resulting signals feed back into subsequent system
	behavior through learning, tuning, or procedural adaptation. Prior work on
	sociotechnical systems shows that these feedback loops transform system impact
	from episodic events into cumulative processes that unfold over time rather than
	at isolated decision points
	\cite{Perrow1984,Latour2005,Rahwan2019,Selbst2019}.
	
	In closed-loop operation, harmful outcomes therefore rarely arise from a single
	decision, model error, or explicit policy violation. Instead, they emerge through
	\emph{accumulation}: the repeated execution of locally rational, policy-compliant
	actions gradually reshapes institutional practices, oversight structures, and
	behavioral norms. Effects that appear benign—or even beneficial—at the level of
	individual actions can, under sustained operation and scale, drive systems
	toward configurations that would be judged unacceptable if they were to arise
	abruptly or by explicit design
	\cite{Leveson2011,Perrow1984,Amodei2016}.
	
	Existing approaches to responsible AI struggle to address this failure mode.
	Prevailing frameworks remain largely rooted in snapshot evaluation, objective
	alignment, or post-hoc oversight \cite{Floridi2018,Raji2020}. Fairness is assessed
	at discrete points in time, performance is measured locally against predefined
	metrics, and governance is invoked only after visible harm has materialized.
	These approaches implicitly assume that responsibility can be inferred from
	isolated decisions or repaired after deployment—an assumption that breaks down
	in feedback-driven systems, where harm emerges through temporal reinforcement
	rather than discrete violation \cite{Selbst2019}.
	
	The core deficiency is architectural. Responsibility is typically treated as an
	objective to be optimized, a constraint to be tuned, or an external corrective
	mechanism imposed through monitoring and governance. Under sustained optimization
	and feedback, however, responsibility does not merely coexist with performance
	objectives—it competes with them. As a result, it becomes vulnerable to proxy
	distortion, normalization of deviance, and gradual erosion over time
	\cite{Leveson2011,Amodei2016}.
	
	What is missing is not another objective or oversight mechanism, but a conception
	of responsibility defined at the level where behavior accumulates into structure
	and choices harden into trajectory. Responsibility must be enforced at the level
	of system execution, or it will inevitably erode.
	
	Accordingly, this paper strengthens the architectural premise of the Social
	Responsibility Stack (SRS): responsibility must be treated as a \emph{reachability
		property} of execution. The relevant question is not whether an individual action
	is correct or aligned, but whether the system’s execution semantics admit
	trajectories that lead to inadmissible global configurations under sustained
	operation. Unacceptable outcomes arise not because a system explicitly violates a
	value, but because execution paths exist that render such states reachable through
	locally correct behavior
	\cite{Baier2008,Blanchini1999,Leveson2011}.
	
	The Social Responsibility Stack was introduced to address this requirement by
	treating responsibility as an execution-level invariant rather than an external
	policy overlay. The present paper advances this foundation by making the execution
	semantics explicit. We show that responsibility preservation can be formulated
	precisely as a reachability problem over execution trajectories, and that
	enforcing responsibility requires constraining which global configurations are
	reachable under learning, feedback, and scale.
	
	To operationalize this perspective, we adopt Petri nets as an execution-level
	formalism. Petri nets provide a minimal yet sufficient substrate for representing
	concurrency, accumulation, feedback, and irreversibility—the structural properties
	through which responsibility erosion arises in closed-loop autonomous systems.
	These properties are intrinsic to execution but are systematically obscured in
	objective-based, Markovian, and reward-driven models that collapse history and
	repetition into memoryless state abstractions
	\cite{Peterson1981,Murata1989,Reisig2013,Esparza1994}.
	
	We use three running examples with distinct purposes: a motivating illustration
	(content distribution), a compact algorithmic governance case (ranking), and a
	full cyber-physical execution case (traffic management) that serves as the primary
	executable demonstration of the framework.
	
	From a cyber-physical systems perspective, the Social Responsibility Stack can be
	viewed as a supervisory execution architecture: societal commitments define
	admissible operating regions, while sensing, actuation, feedback, and governance
	jointly constrain which system trajectories are reachable under continuous
	operation. In this sense, the present work aligns responsible AI with core CPS
	concerns—execution semantics, feedback stability, and invariant preservation
	under adaptation.

	\section{Execution-Level Interpretation of the Social Responsibility Stack}
	
	The Social Responsibility Stack (SRS) was originally introduced as an architectural
	framework for embedding responsibility into intelligent systems operating within
	human and institutional contexts. Rather than treating responsibility as an external
	constraint or post-hoc evaluation criterion, the SRS decomposes responsibility into
	a set of interacting layers, each addressing a distinct failure mode that arises under
	learning, scale, and feedback.
	
	In this section, we recapitulate the SRS with a specific purpose: to reinterpret each
	layer in terms of \emph{execution semantics}. This reframing clarifies how the SRS
	collectively constrains system trajectories over time and prepares the ground for its
	formalization as a reachability property in Section~\ref{sec:petri}.
	
	\subsection{Value Commitments}
	
	The foundational layer of the SRS consists of non-negotiable social, legal, and
	institutional value commitments. These commitments do not prescribe optimal behavior
	or objectives; rather, they define a boundary between admissible and inadmissible system
	states.
	
	From an execution perspective, value commitments function as \emph{state-level
		exclusion criteria}. They specify configurations of the socio-technical system that
	must never be reached, regardless of local utility, performance gains, or short-term
	benefits. Responsibility at this layer is therefore invariant-based: system execution
	must preserve membership in the admissible state set over time.
	
	\subsection{Socio-Technical Impact Modeling}
	
	Socio-technical impact modeling captures how system execution interacts with human
	behavior, institutional processes, and social dynamics over time. Rather than predicting
	isolated outcomes, this layer analyzes how repeated system actions accumulate into
	structural effects such as normalization, dependency, concentration of influence, or
	erosion of oversight.
	
	At the execution level, impact modeling identifies trajectories, feedback loops, and
	accumulation mechanisms that may gradually move the system toward inadmissible regions,
	even when individual actions remain compliant. This layer informs where safeguards are
	needed and which patterns of execution warrant heightened scrutiny.
	
	\subsection{Safeguards and Structural Constraints}
	
	Safeguards constitute the mechanisms by which value commitments are enforced during
	system operation. These include architectural constraints, control logic, rate limits,
	approval gates, and other structural features that shape how actions may be composed
	and repeated.
	
	Crucially, safeguards do not evaluate individual decisions in isolation. Instead, they
	constrain \emph{how execution can unfold}, limiting accumulation, amplification, and
	irreversible transitions. At the execution level, safeguards operate by restricting
	the availability, ordering, or concurrency of transitions that could otherwise enable
	inadmissible trajectories.
	
	\subsection{Behavioral Feedback Interfaces}
	
	Behavioral feedback interfaces govern how system outputs are exposed to, interpreted by,
	and acted upon by human users and institutions. This includes ranking displays,
	recommendations, alerts, explanations, and other mechanisms through which system behavior
	shapes attention, incentives, and response.
	
	From an execution perspective, feedback interfaces modulate how system actions propagate
	into the socio-technical environment. They influence repetition rates, reinforcement
	cycles, and the strength of feedback loops that drive accumulation. While they do not
	define admissibility on their own, they strongly affect the speed and direction of
	execution trajectories.
	
	\subsection{Monitoring and Auditing}
	
	Monitoring and auditing provide visibility into system behavior as it unfolds over
	time. Rather than serving as corrective mechanisms after failure, their primary role
	within the SRS is anticipatory: detecting patterns of execution that increase the
	likelihood of entering inadmissible regions.
	
	Under an execution-based interpretation, auditing tracks \emph{reachability pressure}:
	the proximity of current system states to forbidden configurations under continued
	operation. This shifts auditing away from static compliance checks toward continuous
	assessment of trajectory risk.
	
	\subsection{Governance and Legitimate Intervention}
	
	Governance defines who is authorized to modify system behavior, under what conditions,
	and through which procedures. This includes policy updates, model retraining,
	architectural changes, and institutional oversight mechanisms.
	
	At the execution level, governance corresponds to \emph{legitimate modification of the
		system’s execution structure}. Rather than reacting to individual outcomes, governance
	acts on the space of possible trajectories itself, altering which states and transitions
	are reachable going forward.
	
	\subsection{On the Role of Sensing}
	
	Sensing plays a critical role in responsible systems, but it does not constitute a
	responsibility layer of its own. Sensors expose aspects of system and environmental
	state; they do not impose constraints.
	
	Within the SRS, sensing functions as a cross-cutting interface that supplies state
	information to higher layers. Responsibility arises not from sensing accuracy alone,
	but from how sensed information is incorporated into safeguards, auditing, and
	governance mechanisms that constrain execution.
	
	\subsection{Summary}
	
	Taken together, the SRS layers do not form a hierarchy of objectives or controls.
	They collectively define an execution envelope: a structured restriction on how
	system behavior may evolve under repetition, feedback, and scale.
	
	This interpretation makes explicit that responsibility is not enforced at the level
	of individual decisions, nor guaranteed by alignment at initialization. It is preserved
	only if system execution semantics prevent trajectories that lead to inadmissible global
	configurations. The following section formalizes this insight by modeling SRS-constrained
	execution using Petri-net reachability semantics.
	
	\section{Execution Semantics and Reachability}
	
	The preceding section established responsibility as a property of execution trajectories
	rather than isolated decisions. What remains is to make this requirement operational.
	Doing so requires an execution semantics capable of representing how systems evolve under
	repetition, accumulation, and feedback—precisely the conditions under which responsibility
	is lost in practice.
	
	The failure modes of interest are not instantaneous. They arise when actions that are
	locally correct are repeatedly executed, normalized, and reinforced over time.
	Responsibility violations therefore depend not only on which actions are permitted, but
	on how often they can occur, how their effects accumulate, and how they interact with
	concurrent processes. Any adequate execution semantics must therefore treat repetition
	and accumulation as first-class phenomena rather than incidental side effects.
	
	This requirement immediately rules out purely state-based or memoryless models. Markovian
	transition systems, objective functions, and snapshot constraints collapse execution
	history into a single state description. In doing so, they obscure the mechanisms through
	which responsibility violations arise: gradual erosion of alternatives, buildup of
	institutional reliance, and feedback-driven reinforcement. While such models can describe
	individual decisions, they cannot express how repeated correctness produces global
	inadmissibility.
	
	What is required instead is an execution semantics in which:
	\begin{itemize}
		\item actions are events whose repetition matters,
		\item state records accumulation rather than instantaneous configuration,
		\item concurrency and interaction are explicit,
		\item inadmissible futures can be excluded by construction.
	\end{itemize}
	
	These requirements are not ethical preferences; they are structural necessities imposed
	by closed-loop operation. In safety-critical and mission-critical engineering, similar
	demands have long led to formalisms in which correctness is defined over execution histories
	rather than isolated steps. Here, the same logic applies: responsibility must be preserved
	across entire trajectories, not verified at individual moments.
	
	Petri nets provide exactly this form of execution semantics. They model systems as
	collections of places representing accumulated conditions and transitions representing
	repeatable events whose firing changes the global configuration. Tokens encode execution
	history. Concurrency is explicit. Most importantly, reachability is analyzable: certain
	global markings can be declared unreachable, ensuring that no sequence of locally correct
	actions can produce an inadmissible configuration.
	
	In what follows, we introduce Petri nets not as a modeling convenience, but as a minimal
	execution-level formalism capable of enforcing responsibility as a reachability invariant
	under accumulation, feedback, and scale. Similar dynamics arise in legacy automation and
	control systems, but they become unavoidable in modern AI deployments where learning,
	content generation, and distribution operate continuously within social feedback loops.
	
	\medskip
	
	Let $\mathcal{X}$ denote the socio-technical state space of a system, including internal
	model parameters as well as human behavior distributions, institutional configurations,
	procedural norms, and data-generation mechanisms that co-evolve with system operation.
	Such state spaces cannot be reduced to technical variables alone
	\cite{Rahwan2019,Selbst2019,Latour2005}. Moreover, $\mathcal{X}$ is not static: its
	dimensions, couplings, and effective geometry evolve as the system interacts with its
	environment.
	
	Let $\mathcal{R} \subset \mathcal{X}$ denote the admissible region of operation derived
	from fixed societal references and formalized commitments. Responsibility is expressed
	as the invariant condition
	\[
	x(t) \in \mathcal{R} \quad \forall t \ge 0 .
	\]
	This condition is not aspirational. It is a hard execution-level requirement, analogous
	to invariance conditions in control and safety analysis \cite{Blanchini1999,Ames2017}.
	Responsibility is violated not when performance degrades, but when system evolution
	leaves $\mathcal{R}$.
	
	States outside $\mathcal{R}$ correspond to inadmissible configurations such as irreversible
	harm, loss of legitimacy, concentration of power, erosion of oversight, or institutional
	lock-in. Work on safety-critical systems and organizational failure shows that such
	configurations may be reached without any single action being unethical, incorrect, or
	malicious \cite{Leveson2011,Perrow1984}. Instead, they emerge through accumulation,
	feedback, and interaction rather than discrete violations.
	
	Responsibility is therefore a reachability property. The central question is not whether a
	system intends harm, nor whether it satisfies constraints at individual steps, but whether
	its execution semantics permit trajectories that exit $\mathcal{R}$ under sustained
	operation. This framing aligns responsibility with reachability-based reasoning in
	verification and hybrid systems theory \cite{Baier2008,Alur2015}.
	\subsection{Responsibility as an Execution Invariant}
	
	Let $\mathcal{M}_P \triangleq \mathbb{N}^{|P|}$ denote the marking space of the Petri net
	(over places $P$), and let $\phi:\mathcal{M}_P\rightarrow\mathcal{X}$ be an abstraction map.
	The induced admissible marking set is
	\[
	\mathcal{M}_{\mathrm{adm}} \triangleq \{\,M\in\mathcal{M}_P \mid \phi(M)\in\mathcal{R}\,\},
	\]
	and the forbidden markings are $\mathcal{M}_{\mathrm{bad}}\triangleq \mathcal{M}_P\setminus \mathcal{M}_{\mathrm{adm}}$.
	
	Let $\mathcal{N}=(P,T,F,M_0)$ denote the Petri net modeling system execution.
	
	\begin{definition}[Responsible Execution]
		A system execution is \emph{responsible} if and only if
		\[
		\mathrm{Reach}(\mathcal{N}, M_0)\subseteq \mathcal{M}_{\mathrm{adm}}
		\quad\text{equivalently}\quad
		\mathrm{Reach}(\mathcal{N}, M_0)\cap \mathcal{M}_{\mathrm{bad}}=\emptyset .
		\]
	\end{definition}

	\begin{theorem}[Responsibility as Reachability Invariance]
		Responsibility in closed-loop AI systems is equivalent to invariance of the
		admissible marking set under the system’s execution semantics.
	\end{theorem}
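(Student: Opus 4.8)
The plan is to reduce the statement to the inductive (fixpoint) description of the Petri-net reachability set and then invoke the Definition of Responsible Execution. First I would fix notation for the execution semantics: write $M\xrightarrow{t}M'$ when a transition $t\in T$ is enabled at marking $M$ (i.e.\ $M$ carries at least the tokens that $F$ requires on the input places of $t$) and $M'$ is the marking obtained by firing $t$. Firing an enabled transition keeps the marking in $\mathcal{M}_P=\mathbb{N}^{|P|}$, so $\bigcup_{t\in T}\xrightarrow{t}$ is a relation on $\mathcal{M}_P$, and $\mathrm{Reach}(\mathcal{N},M_0)$ is its reflexive–transitive closure applied to $\{M_0\}$. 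The structural fact I would record is that $\mathrm{Reach}(\mathcal{N},M_0)$ is the least subset of $\mathcal{M}_P$ that contains $M_0$ and is closed under one-step firing.

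Next I would make ``invariance of $\mathcal{M}_{\mathrm{adm}}$ under the execution semantics'' precise, \emph{relative to the initial marking}: $\mathcal{M}_{\mathrm{adm}}$ is invariant if $M_0\in\mathcal{M}_{\mathrm{adm}}$ and, for every $M\in\mathcal{M}_{\mathrm{adm}}\cap\mathrm{Reach}(\mathcal{N},M_0)$ and every $t\in T$ with $M\xrightarrow{t}M'$, one has $M'\in\mathcal{M}_{\mathrm{adm}}$. With this definition the equivalence is proved in two directions. For the ``invariance $\Rightarrow$ responsibility'' direction I would take an arbitrary $M\in\mathrm{Reach}(\mathcal{N},M_0)$, fix a firing sequence $M_0\xrightarrow{t_1}M_1\xrightarrow{t_2}\cdots\xrightarrow{t_n}M_n=M$, and show $M_k\in\mathcal{M}_{\mathrm{adm}}$ by induction on $k$: the base case is $M_0\in\mathcal{M}_{\mathrm{adm}}$, and the step applies the closure clause to the reachable marking $M_k\in\mathcal{M}_{\mathrm{adm}}$; hence $\mathrm{Reach}(\mathcal{N},M_0)\subseteq\mathcal{M}_{\mathrm{adm}}$, equivalently $\mathrm{Reach}(\mathcal{N},M_0)\cap\mathcal{M}_{\mathrm{bad}}=\emptyset$, which is exactly responsible execution by the Definition. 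For the converse, assuming $\mathrm{Reach}(\mathcal{N},M_0)\subseteq\mathcal{M}_{\mathrm{adm}}$, I note that $M_0$ is reachable, so $M_0\in\mathcal{M}_{\mathrm{adm}}$, and that a one-step successor of any reachable marking is again reachable, hence lies in $\mathcal{M}_{\mathrm{adm}}$; this is precisely invariance.

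The point that needs care — and which I expect to be the main (conceptual rather than technical) obstacle — is the \emph{choice} of the notion of invariance: global forward invariance of $\mathcal{M}_{\mathrm{adm}}$ over all of $\mathcal{M}_P$, i.e.\ closure under firing even at markings not reachable from $M_0$, is strictly stronger than responsibility and would make the equivalence fail, so the theorem must be read with invariance taken along executions from $M_0$. I would also remark that, because $\mathcal{M}_{\mathrm{adm}}=\phi^{-1}(\mathcal{R})$ is defined semantically through the abstraction $\phi$ and the admissible region $\mathcal{R}\subset\mathcal{X}$ rather than by any structural condition on $\mathcal{N}$, the equivalence concerns only which markings the firing relation can reach; no monotonicity or compositionality of $\phi$ is needed, so the result holds for arbitrary $\phi$ and arbitrary $\mathcal{R}$. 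Finally I would close by restating the contrapositive — a responsibility violation is exactly a reachable forbidden marking, i.e.\ a witness that $\mathcal{M}_{\mathrm{adm}}$ is \emph{not} invariant — which ties the statement back to the execution-level reading of the SRS layers developed above.
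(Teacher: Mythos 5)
Your proposal is correct and follows essentially the same route as the paper, whose proof is only a sketch equating responsibility with the unreachability of $\mathcal{M}_{\mathrm{bad}}$ from $M_0$. The additional precision you supply---defining invariance relative to executions from $M_0$, carrying out the induction on firing sequences, and noting that global forward invariance over all of $\mathcal{M}_P$ would be strictly stronger and break the equivalence---fills in exactly what the paper's sketch leaves implicit, rather than changing the argument.
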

	
	\begin{proof}[Sketch]
		Closed-loop AI systems operate through repeated transition firings driven by
		feedback, adaptation, and accumulation. If any inadmissible marking is reachable,
		then there exists a finite execution prefix composed of locally permissible
		actions that leads to global violation. Conversely, if all inadmissible markings
		are unreachable, no sequence of locally rational actions can produce unacceptable
		outcomes. Responsibility therefore reduces to a reachability property of the
		execution graph.
	\end{proof}
	
	\subsection{Trajectory-Based Failure}
	
	Autonomous AI systems act repeatedly in evolving environments. Each action reshapes
	incentives, data, and subsequent system context. Over time, small locally rational
	decisions accumulate into global structural change, a phenomenon well documented in
	safety engineering and sociotechnical analysis \cite{Leveson2011,Perrow1984}. A system may
	satisfy every local constraint while still inducing an unacceptable trajectory.
	
	This failure mode is intrinsic to feedback-driven systems. Local compliance does not
	imply global safety. Repetition and reinforcement can gradually steer the system toward
	inadmissible regions of the state space even as each individual step remains correct.
	Here, the dominant failure is \emph{norm-relative system drift}: the execution trajectory
	drifts relative to a fixed admissible set $\mathcal{R}$, rather than reflecting any change
	in societal reference itself.
	
	Responsibility therefore cannot be verified at a single time step. It must hold across
	all reachable execution paths. Any framework that evaluates responsibility through
	snapshot metrics or isolated decisions is structurally incomplete
	\cite{Baier2008,Selbst2019,Raji2020}. As in mission-critical system modeling, the dominant
	failure mode is not logical incorrectness but temporal--structural failure arising from
	accumulation, repetition, and loss of reversibility under sustained execution.
	
	\subsection{Autonomous Execution Expands Reachability}
	
	Autonomous execution expands the set of reachable states. Let $\mathcal{X}_t$ denote the
	set of states reachable up to time $t$. Under continued execution,
	\[
	\mathcal{X}_{t+1} \supseteq \mathcal{X}_t .
	\]
	where $\mathcal{X}_t$ denotes the set of states reachable within $t$ execution steps.
	
	As AI systems operate under feedback, scale, and interaction, they discover new modes of
	influence, representation, and control. This expansion of effective reachability occurs
	regardless of whether adaptation is implemented through online learning, policy updates,
	environment-mediated feedback, or procedural automation
	\cite{Amodei2016,Cunningham2021,Carey2022}.
	
	If an inadmissible state is reachable under the system’s execution semantics, then
	sustained autonomous operation will eventually discover a path to it. This is not a
	probabilistic anomaly but a structural property of exploration under optimization and
	feedback pressure. We say $x\in\mathcal{X}$ is reachable if $\exists\, M\in \mathrm{Reach}(\mathcal{N},M_0)$ such that $\phi(M)=x$.
	
	\begin{theorem}[Unavoidability of Reachable Responsibility Violations]
		Let $\mathcal{X}$ denote the system state space and let $\mathcal{R} \subset \mathcal{X}$
		denote the admissible region. If there exists a state
		\[
		x^\dagger \notin \mathcal{R}
		\]
		that is reachable under the system’s execution semantics, then no objective-based
		specification or post-hoc governance mechanism can guarantee that $x^\dagger$ will not be
		reached under sustained autonomous operation. 
		
	\end{theorem}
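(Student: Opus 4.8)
The plan is to obtain the statement as a structural corollary of the reachability characterization of responsibility established above (Responsibility as Reachability Invariance) together with the monotone expansion $\mathcal{X}_{t+1}\supseteq\mathcal{X}_t$, the key point being that neither class of mechanism named in the hypothesis alters the reachability relation of the underlying execution net $\mathcal{N}$. Concretely, I would argue by contradiction: suppose some objective-based specification or post-hoc governance mechanism $G$ nonetheless \emph{guarantees} that the system state never equals $x^\dagger$ along any execution. The first step is to unpack ``reachable under the system's execution semantics'' through the abstraction map: by hypothesis and the definition preceding the theorem, there is a marking $M^\dagger\in\mathrm{Reach}(\mathcal{N},M_0)$ with $\phi(M^\dagger)=x^\dagger$, hence a finite firing sequence $\sigma$ with $M_0\xrightarrow{\sigma}M^\dagger$ composed entirely of locally enabled transitions.

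The second step is to show that $G$ does not delete $\sigma$ from the executable behaviors of $\mathcal{N}$. For an objective-based specification this is immediate: an objective is a valuation $J:\mathcal{X}\to\mathbb{R}$ (or a reward over trajectories) that ranks or reweights execution paths; it changes \emph{which} paths are preferred, not the flow relation $F$ or the transition set $T$, so $\mathrm{Reach}(\mathcal{N},M_0)$ — and in particular the enabledness of $\sigma$ — is unchanged. For a post-hoc governance mechanism the argument is definitional: ``post-hoc'' means the intervention is conditioned on prior observation of the realized violation (or a flagged precursor), so it can only modify execution structure \emph{after} an execution has already entered the monitored region; since $x^\dagger$ is itself the inadmissible configuration, the prefix $\sigma$ reaches it before any such corrective modification can be triggered, and any monitoring that reliably preempts $\sigma$ is by definition a structural safeguard, not a post-hoc corrective, hence outside the hypothesis. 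This step — sharpening the taxonomy ``objective vs.\ post-hoc vs.\ structural'' enough that the claim is non-vacuous yet provable, while matching the Safeguards and Governance layers of Section~\ref{sec:petri} — is where I expect the real work, since the theorem is only as strong as the formalization of what an objective-based specification and a post-hoc governance mechanism are permitted to do.

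The third step is to invoke sustained operation. Because $\sigma$ remains an executable firing sequence of $\mathcal{N}$ and closed-loop systems fire transitions repeatedly under feedback and exploration pressure, the monotone expansion yields $\bigcup_{t\ge 0}\mathcal{X}_t=\phi(\mathrm{Reach}(\mathcal{N},M_0))\ni x^\dagger$, so there is a finite horizon $t^\star$ with $x^\dagger\in\mathcal{X}_{t^\star}$; equivalently, there exists an execution trajectory consistent with both the semantics and $G$ along which the state equals $x^\dagger$. This contradicts the assumed guarantee and proves the theorem. I would close by making explicit the converse reading the result is meant to convey: the failure of the guarantee can be removed only by removing $\sigma$ from $\mathrm{Reach}(\mathcal{N},M_0)$ — that is, by acting on execution structure rather than on objectives or after-the-fact correction — which is precisely the reachability-invariance prescription of the preceding theorem.
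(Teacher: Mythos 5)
Your proposal is correct and follows essentially the same route as the paper's own (sketch) proof: objective-based specifications reshape preferences over paths without altering the reachability relation, post-hoc governance by definition acts only after the violating prefix has been realized, and sustained closed-loop operation ensures the surviving firing sequence remains an executable behavior, so only restricting reachability itself can exclude $x^\dagger$. Your version is merely more explicit — unpacking reachability through $\phi$ and a concrete firing sequence $\sigma$, and flagging that the theorem's force rests on how sharply ``objective-based'' and ``post-hoc'' are delimited from structural safeguards — which is a fair refinement of the paper's informal sketch rather than a different argument.
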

	
	\begin{proof}[Sketch]
		Objective-based mechanisms shape preferences but do not alter reachability. As long as a
		path to $x^\dagger$ exists, exploration under uncertainty, delayed rewards, or proxy
		objectives assigns non-zero probability to that path over unbounded time. Post-hoc
		governance reacts to violations after reachability has already been realized. Preventing
		entry into $x^\dagger$ therefore requires restricting reachability itself.
	\end{proof}
	
	\subsection{Example: Generative Content and Distribution Feedback Loops}
	
	A canonical contemporary example of responsibility violation through reachability arises
	in large-scale generative content creation and distribution systems. These systems
	generate content, rank and distribute it through engagement-driven mechanisms, and
	continuously adapt based on user interaction signals.
	
	We use content distribution here \emph{solely as a motivating illustration} to highlight
	how locally rational actions can, under repetition and feedback, produce globally
	inadmissible outcomes. This example is intentionally informal: it is meant to ground the
	problem intuition rather than to serve as a complete execution-level model. Formal
	execution semantics, reachability analysis, and Petri-net representations are introduced
	in subsequent sections, where responsibility is treated as an explicit invariant over
	system trajectories.

	Let $x(t) \in \mathcal{X}$ denote the joint socio-technical state of a content platform at
	time $t$, including generative models, ranking policies, moderation capacity, user
	behavior distributions, and data-generation pipelines. Initially, the system operates
	within an admissible region $\mathcal{R}$ characterized by diversity of exposure,
	meaningful human oversight, and bounded feedback between engagement signals and learning.
	
	At each step, the system performs locally rational actions: generating content optimized
	for engagement, amplifying high-performing outputs, collecting interaction data, and
	updating models or policies accordingly. Each action is defensible in isolation.
	
	Under sustained execution, however, these actions form reinforcing feedback loops.
	Amplified content reshapes attention and behavior, which reshapes training data, which in
	turn biases future generation and distribution. Over time, the system may reach a state
	$x^\dagger \notin \mathcal{R}$ in which diversity collapses, polarizing content dominates
	visibility, and human oversight becomes ineffective due to scale rather than formal
	removal.
	
	This violation is not caused by any single output or decision. The system remains
	performant by its objectives. Responsibility is lost because the execution semantics
	permit trajectories that normalize and reinforce specific content regimes. Once such a
	configuration is reached, reversal is non-trivial due to endogenous data, institutional
	adaptation, and user expectation drift.
	
	Similar trajectory-level failures have long been observed in legacy automation systems,
	but continuous learning and large-scale content distribution make them unavoidable in
	modern generative platforms.
	\section{Implications for the Social Responsibility Stack}
	
	The Social Responsibility Stack (SRS) is motivated by a demanding execution-level requirement:
	responsibility must persist under learning, scale, and feedback. Prior work on the Social Responsibility Stack (SRS) has argued that achieving this persistence requires more than principles \cite{Basir2025}. It requires mechanisms that operate at the level of
	\emph{execution semantics} and remain effective as systems adapt over time, a limitation repeatedly
	observed in post-hoc accountability regimes \cite{Raji2020,Rahwan2019}.
	
	At a minimum, the SRS must support capabilities that correspond directly to its layers:
	(i) explicit specification of inadmissible futures;
	(ii) identification of critical trajectories and reinforcing cycles through which harm may accumulate;
	(iii) structural restriction of execution, including learning and adaptation paths;
	(iv) detection of approach toward responsibility boundaries before violation occurs;
	(v) continuous auditing of drift, accumulation, and feedback effects; and
	(vi) legitimate and timely governance intervention when commitments or operating conditions change.
	These capabilities reflect established lessons from safety engineering and institutional failure
	analysis \cite{Leveson2011} and are central to the SRS architecture \cite{Basir2025}.
	
	These capabilities cannot be realized through a single control loop or evaluative layer. They
	require an execution model in which system evolution, accumulation, and concurrency are explicit.
	Petri nets provide precisely this substrate, as established in classical work on concurrent
	systems \cite{Peterson1981,Murata1989,Reisig2013}. Within the SRS, Petri nets allow responsibility
	to be carried structurally across layers rather than enforced episodically from the outside.
	
	\medskip
	\noindent\textbf{Interpretation as a control architecture.}
	The SRS can be read as a closed-loop governance architecture over execution:
	\emph{value grounding} defines a non-negotiable safe region; \emph{impact modeling} characterizes
	how execution may approach its boundary; \emph{safeguards} restrict enablement of risky trajectories;
	\emph{feedback interfaces} provide observables; \emph{auditing} estimates proximity and drift; and
	\emph{governance} performs legitimate supervisory intervention by revising constraints and
	enablement structure. Petri nets provide a shared semantics in which these roles become structural
	objects (markings, transitions, guards, counters, and mode switches) rather than informal
	procedures.
	
	\begin{figure}[t]
		\centering
		\begin{tikzpicture}[
			font=\small,
			>=Stealth,
			place/.style={circle, draw, thick, minimum size=10mm, inner sep=0pt},
			placeDbl/.style={circle, draw, thick, double, minimum size=10mm, inner sep=0pt},
			trans/.style={rectangle, draw, thick, minimum width=2.8mm, minimum height=11mm, fill=gray!10},
			transCol/.style={rectangle, draw, thick, minimum width=2.8mm, minimum height=11mm, fill=#1},
			arc/.style={-Stealth, thick},
			lab/.style={font=\scriptsize, align=left},
			tok/.style={circle, fill=black, inner sep=1.2pt},
			dashedbox/.style={draw, dashed, thick, rounded corners=2pt, inner sep=6pt}
			]
			
			\node[place] (pA) at (-7.5,  0.0) {$p_A$};
			\node[place, fill=purple!10] (pPolicy) at (-7.5, -2.2) {$p_{\text{policy}}$};
			
			\node[place] (pB) at (-3.8,  0.0) {$p_B$};
			\node[place, fill=blue!10] (pDash) at (-5.2,  2.2) {$p_{\text{dash}}$};
			
			\node[place, fill=green!12] (pPermit) at (-2.2,  1.8) {$p_{\text{permit}}$};
			\node[place, fill=orange!12] (pCount)  at ( 0.3,  1.8) {$\#(t_2)$};
			
			\node[place] (pC) at ( 1.3,  0.0) {$p_C$};
			\node[place] (pD) at ( 0.2, -2.2) {$p_D$};
			
			\node[placeDbl, fill=red!10] (pBad) at ( 4.5,  0.0) {$p_{\text{bad}}$};
			
			\node[place, fill=orange!10] (pFlag) at ( 3.0,  1.8) {$p_{\text{flag}}$};
			
			\node[tok] at ([yshift=6pt]pA.center) {};
			\node[tok] at ([yshift=6pt]pPolicy.center) {};
			\node[tok] at ([yshift=6pt]pPermit.center) {};
			
			\node[trans] (tA) at (-5.8,  0.0) {};     
			\node[transCol=pink!15] (tPol) at (-5.8, -2.2) {}; 
			
			\node[transCol=blue!12] (tDash) at (-3.8,  2.2) {}; 
			\node[trans] (t2) at (-1.2,  0.0) {};      
			\node[transCol=orange!12] (tAudit) at ( 1.7,  1.8) {}; 
			
			\node[trans] (tCD1) at ( 1.2, -1.0) {};
			\node[trans] (tCD2) at ( 2.2, -1.6) {};
			
			\draw[arc] (pA) -- (tA);
			\draw[arc] (tA) -- (pB);
			
			\draw[arc] (pPolicy) -- (tPol);
			\draw[arc] (tPol) -- (pB);
			
			\draw[arc] (pPermit) -- (t2);
			
			\draw[arc] (pB) -- (t2);
			\draw[arc] (t2) -- (pC);
			
			\draw[arc] (pC) -- (pBad);
			
			\draw[arc, dashed] (pB) -- (tDash);
			\draw[arc] (tDash) -- (pDash);
			
			\draw[arc] (t2) to[bend left=18] (pCount);
			
			\draw[arc] (pCount) -- (tAudit);
			\draw[arc] (tAudit) -- (pFlag);
			
			\draw[arc] (pC)   to[bend left=18]  (tCD1);
			\draw[arc] (tCD1) to[bend left=18]  (pD);
			
			\draw[arc] (pD)   to[bend right=18] (tCD2);
			\draw[arc] (tCD2) to[bend right=18] (pC);
			
			\draw[arc] (pD) to[bend left=18] (t2);
			
			\node[dashedbox, fit=(t2)(pC)(pD)(tCD1)(tCD2),
			label={[lab]south:Impact modeling: critical cycle/paths}] (impactBox) {};
			
			\node[lab, anchor=west] at (-7.9, 3.15) {\textbf{Feedback interface:}\\read/observe markings};
			\node[lab, anchor=west] at (-3.2, 3.15) {\textbf{Safeguards:}\\guards, permits, inhibitor};
			\node[lab, anchor=west] at ( 0.7, 3.15) {\textbf{Auditing:}\\rates, thresholds, alarms\\$[\#(t_2)>\theta]$};
			
			\node[lab, anchor=west] at (-7.9,-3.45) {\textbf{Governance:}\\policy-mode reshapes enablement};
			
			\node[lab] at (-1.4, 0.82) {$t_2:\;g(M)$};
			
			\node[lab, anchor=west] at (4.0, -0.75) {$M(p_{\text{bad}})=0$};
			
			\node[lab, anchor=west] at (0.0, -4.85) {
				\textbf{SRS layers (symbolized)}\\
				1) Value grounding $\rightarrow$ forbidden place/marking\\
				2) Impact modeling $\rightarrow$ critical paths \& cycles\\
				3) Safeguards $\rightarrow$ guards/permits/inhibitors\\
				4) Feedback interfaces $\rightarrow$ observer subnet\\
				5) Auditing $\rightarrow$ counters/thresholds/alarms\\
				6) Governance $\rightarrow$ policy-mode switching
			};
			
		\end{tikzpicture}
		\caption{Symbolic Petri-net realization of SRS layers within a shared execution semantics.
			Value grounding is represented by forbidden markings; impact modeling by critical cycles and
			paths; safeguards by guards/permits; feedback interfaces by an observer subnet; auditing by
			counters and thresholds; and governance by policy-mode structural reshaping.}
		\label{fig:srs-petri-symbolic}
	\end{figure}
	
	\noindent\textbf{Value grounding (inadmissible futures).}
	Within the SRS, \emph{value grounding} corresponds to declaring forbidden markings in the
	execution semantics. Forbidden markings represent global configurations that are categorically
	unacceptable; not outcomes to be traded off against utility. Examples include: loss of effective
	human recourse or appeal; sustained amplification regimes that undermine social trust; institutional
	reliance that exceeds audit and oversight capacity; or endogenous data accumulation that destroys
	traceability. Expressing such commitments as unreachable markings, rather than as soft objectives,
	establishes non-compensable boundaries on system evolution \cite{Floridi2018,Selbst2019,Leveson2011,Basir2025}.
	
	\noindent\textbf{Socio-technical impact modeling (reachability geometry).}
	\emph{Impact modeling} operates by analyzing the reachability structure induced by ordinary use.
	In Petri-net terms, this corresponds to identifying critical firing sequences, feedback cycles,
	and accumulation paths that move execution toward forbidden regions. In public-sector risk scoring,
	a critical path may involve repeated deferral to recommendations followed by procedural codification
	and reduced review capacity. In generative content distribution, a critical cycle may involve
	repeated amplification of high-engagement outputs, endogenous data formation, and escalating
	moderation burden. The purpose is not to predict a single outcome but to reveal how sustained,
	locally compliant operation can approach inadmissible futures through reinforcement and drift
	\cite{Rahwan2019,Selbst2019,Murata1989,Esparza1994}.
	
	\noindent\textbf{Design-time safeguards (structural restriction).}
	\emph{Safeguards} are realized as structural constraints on execution rather than output-level
	corrections. In Petri nets, safeguards correspond to guards on transitions, permits/inhibitors,
	rate limits on reinforcing cycles, or architectural reconfiguration of enablement that blocks
	responsibility-violating trajectories. This parallels invariant-based control and safety envelopes
	in engineering systems \cite{Blanchini1999,Ames2017}. The key point is structural: safeguards
	ensure that no sequence of locally rational decisions can accumulate into a trajectory that reaches
	forbidden markings \cite{Carey2022,Basir2025}.
	
	\noindent\textbf{Feedback interfaces and continuous social auditing (observability of drift).}
	\emph{Feedback interfaces} and \emph{continuous social auditing} correspond to observing
	execution-level signals that are meaningful at the socio-technical level: token accumulation
	rates, firing frequencies of critical transitions, activation of reinforcing subnets, and proximity
	to forbidden regions. Unlike scalar performance metrics, reachability-based auditing surfaces
	directional trends such as accelerating cycles, depletion of alternative pathways, or rising
	pressure on boundaries \cite{Raji2020,Rahwan2019,Baier2008}. In generative platforms, for example,
	auditing may track the growth rate of endogenous content reuse, concentration of exposure, or
	the activation rate of moderation escalation transitions as indicators of approach to inadmissible
	configurations.
	
	\noindent\textbf{Governance and stakeholder inclusion (supervisory authority).}
	Finally, \emph{governance} operates at the level of execution structure rather than case-level
	decisions. Legitimate intervention is realized by modifying the Petri net itself: redefining
	forbidden markings, altering enablement, introducing oversight places, or switching policy modes
	that reshape execution. Governance does not micromanage individual outputs; it retains authority
	over the space of admissible futures through explicit, accountable structural change
	\cite{Latour2005,Floridi2018,Basir2025}.
	
	\medskip
	Across the stack, responsibility is not enforced by any single layer acting in isolation. Value
	grounding without reachability analysis is blind. Impact modeling without execution-level
	restriction is inert. Safeguards without auditing ossify. Auditing without governance produces
	awareness without authority. Responsibility emerges only when these layers interact through a
	shared execution semantics that constrains how socio-technical practice evolves over time
	\cite{Leveson2011,Perrow1984,Raji2020,Basir2025}.
	
	Petri nets provide that shared semantics. They allow responsibility to be expressed as a property
	of system evolution rather than as an overlay on individual decisions. Controlled reachability
	becomes the unifying principle linking values, impact analysis, safeguards, auditing, and governance
	into a coherent whole \cite{Peterson1981,Murata1989,Esparza1994,Basir2025}.
	
	\medskip
	\noindent\textbf{Design--execution linkage.}
	Although Petri nets are an execution formalism, they are also a design-time artifact: the net
	\emph{is} the executable structure whose reachability set will later govern behavior.
	Responsibility is therefore a design concern precisely because reachability is determined
	before deployment by places, transitions, guards, inhibitors, and supervisory mode switches.
	Execution then instantiates (and explores) only what the designed semantics make reachable.
	In this sense, SRS is ``execution-level'' in the property it enforces, but ``design-level''
	in how that property is guaranteed: by constructing an execution semantics in which
	inadmissible markings are unreachable by construction.
	
	\section{Petri-Net Formalization of SRS-Constrained Execution}
	\label{sec:petri}
	
	We model AI systems governed by the Social Responsibility Stack as Petri nets,
	which provide an execution-level representation of concurrency, accumulation,
	and irreversibility.
	
	\subsection{Petri-Net Preliminaries}
	
	A Petri net is a tuple $\mathcal{N} = (P, T, F, M_0)$, where

	\begin{itemize}
		\item $P$ is a finite set of places,
		\item $T$ is a finite set of transitions,
		\item $F \subseteq (P \times T) \cup (T \times P)$ is the flow relation,
		\item $M_0$ is the initial marking.
	\end{itemize}

	A marking $M$ assigns a nonnegative number of tokens to each place. System
	execution corresponds to the firing of transitions, producing a reachability
	graph over markings.
	
	\subsection{Mapping SRS Layers to Petri-Net Semantics}
	
	The SRS architecture can be mapped onto Petri-net constructs as follows:
	
	\begin{center}
		\begin{tabular}{@{}p{0.32\linewidth}p{0.6\linewidth}@{}}
			\toprule
			\textbf{SRS Layer} & \textbf{Petri-Net Execution-Level Interpretation} \\
			\midrule
			Value Commitments &
			Forbidden markings defining categorically inadmissible global configurations \\
			
			Socio-Technical Impact Modeling &
			Identification of critical paths, feedback cycles, accumulation mechanisms, and
			structurally irreversible regions in the reachability graph \\
			
			Safeguards and Structural Constraints &
			Guards, inhibitor/permit arcs, capacity places, rate limits, and structural
			restrictions that block responsibility-violating trajectories \\
			
			Behavioral Feedback Interfaces &
			Observer and mediation subnets that govern how system actions couple into human
			and institutional response, shaping reinforcement and loop gain \\
			
			Monitoring and Auditing &
			Counters, thresholds, clocks, and alarms estimating proximity to forbidden
			markings and detecting reachability pressure and drift \\
			
			Governance &
			Legitimate modification of execution structure (places, transitions, guards,
			policy modes) that reshapes the space of admissible trajectories \\
			\midrule
			Sensing (cross-cutting, not a layer) &
			State exposure mechanisms that provide marking-derived observables to safeguards,
			auditing, and governance without imposing constraints \\
			\bottomrule
		\end{tabular}
	\end{center}
	
	For clarity, we treat sensing as a cross-cutting execution capability rather than a responsibility layer, since sensing exposes state but does not itself constrain reachability. Under this interpretation, responsibility is not a property of individual
	transitions, but of the \emph{reachable marking set} induced by the system’s
	execution semantics.
	
	\subsection{Embedding SRS Invariants into Petri-Net Execution Semantics}
	
	The preceding discussion showed how each layer of the SRS corresponds to a structural element of
	Petri-net execution. We now make the correspondence formal: the original SRS responsibility
	invariant is preserved without modification under Petri-net reachability semantics.
	
	Let $\mathcal{X}$ denote the socio-technical state space and let $\mathcal{R}\subset\mathcal{X}$ be
	the admissible region induced by grounded commitments. Responsibility is the invariance condition
	\[
	x(t)\in\mathcal{R}\quad \forall t\ge 0,
	\]
	which must hold under learning, adaptation, and interaction.
	
	Let $\mathcal{N}=(P,T,F,M_0)$ be a Petri net modeling execution, and let $\mathcal{M}$ be the set
	of markings reachable from $M_0$:
	\[
	\mathcal{M}=\mathrm{Reach}(\mathcal{N}).
	\]
	Each marking $M\in\mathcal{M}$ encodes an accumulated execution configuration (including technical
	state, institutional coupling, and governance capacity). Define an abstraction mapping
	\[
	\phi:\mathcal{M}\rightarrow\mathcal{X}
	\]
	that associates each marking with its socio-technical state.
	
	The admissible markings induced by $\mathcal{R}$ are
	\[
	\mathcal{M}_{\mathrm{adm}} \triangleq \{\,M\in\mathcal{M}\mid \phi(M)\in\mathcal{R}\,\},
	\]
	and the forbidden markings are
	\[
	\mathcal{M}_{\mathrm{bad}} \triangleq \mathcal{M}\setminus \mathcal{M}_{\mathrm{adm}}.
	\]
	Then the SRS invariant is realized in Petri-net terms as the reachability exclusion
	\[
	\mathcal{M}_{\mathrm{bad}}\cap \mathrm{Reach}(\mathcal{N})=\emptyset.
	\]
	
	This equivalence is central. Petri nets do not redefine responsibility; they provide an execution
	semantics in which the SRS invariant can be enforced, analyzed, and audited. Responsibility is
	preserved if and only if the net structure prevents any firing sequence from reaching a forbidden
	marking, regardless of optimization pressure, learning dynamics, or feedback effects.
	
	In this semantics, learning and adaptation correspond to exploration of firing sequences and
	policy-mode structure. Optimization remains free to operate within $\mathcal{M}_{\mathrm{adm}}$,
	but it cannot redefine $\mathcal{M}_{\mathrm{bad}}$. Responsibility is therefore not a policy overlay
	or episodic correction; it is a structural invariant carried by execution semantics themselves.
	
	\section{Why Petri Nets—and What Else Could Work}
	
	The Social Responsibility Stack (SRS) does not prescribe a single mathematical formalism.
	Its core requirement is architectural: responsibility must be enforced at the level of
	\emph{execution semantics} by constraining which global system configurations and
	trajectories are reachable under adaptation, feedback, and scale \cite{Basir2025}.
	Any formalism capable of serving as the execution semantics of the SRS must therefore
	satisfy four structural criteria:
	
	\begin{enumerate}[label=(\roman*)]
		\item explicit representation of concurrency and coupled processes,
		\item native support for accumulation and history-sensitive state,
		\item analyzable reachability and invariants at the level of global execution,
		\item the ability to express \emph{non-compensable} forbidden configurations (hard exclusions),
		rather than soft penalties or trade-offs.
	\end{enumerate}
	
	Petri nets satisfy all four simultaneously, which is why they are adopted here as a concrete
	execution semantics for the SRS. However, they are not the only formalism that can express
	\emph{parts} of responsible execution. This section clarifies what alternatives can do, what they
	cannot do without additional structure, and why Petri nets are minimal for the class of
	trajectory-level responsibility failures emphasized in this paper.
	
	\subsection{Transition Systems, MDPs, and Reward-Based Learning: Expressive but Structurally Misaligned}
	
	Classical transition systems, Markov decision processes (MDPs), and reinforcement learning (RL)
	formulations offer powerful machinery for sequential decision making and control under uncertainty
	\cite{Russell2019}. In their standard forms, however, they are structurally misaligned with the
	dominant responsibility failure mode in socio-technical AI: \emph{trajectory-level drift under
		accumulation and feedback}.
	
	The limitation is not that these models cannot represent history at all, in principle, one may
	augment state to encode memory. The limitation is that responsibility-relevant phenomena in
	modern deployments often involve:
	(i) parallel processes (generation, ranking, moderation, governance),
	(ii) explicit counting (frequency, rate, backlog, saturation),
	(iii) endogenous data formation (training data altered by prior outputs),
	and (iv) irreversible institutional coupling (procedural codification, reliance, loss of recourse).
	Encoding these elements in Markovian abstractions typically requires artificial state aggregation
	that obscures semantics, makes auditing signals indirect, and makes constraint enforcement rely on
	reward shaping rather than structural exclusion.
	
	Most importantly, encoding responsibility into reward terms or soft constraints does not, by
	itself, eliminate reachability. It changes preferences over trajectories but does not guarantee
	that inadmissible states are structurally unreachable. Under sustained optimization pressure,
	proxy objectives, delayed effects, and exploration, reachable but undesirable regions remain
	reachable in principle \cite{Amodei2016,Cunningham2021}. For SRS, the requirement is not merely to
	\emph{discourage} irresponsible trajectories, but to \emph{exclude} them by construction.
	
	\subsection{Temporal Logic and Model Checking: Strong for Verification, Not a Standalone Semantics}
	
	Temporal logics such as LTL and CTL provide expressive specification languages for safety and
	liveness properties over execution traces \cite{Baier2008}. They are well suited for verifying
	whether a given transition system satisfies a responsibility specification across all paths.
	
	However, temporal logic is descriptive rather than generative: it specifies properties of traces
	but does not itself provide an execution substrate. In practice, it must be paired with an
	underlying model (a transition system, MDP, hybrid system, or Petri net). If the underlying model
	collapses accumulation or concurrency, then the specification may still be checkable but the
	responsibility-relevant mechanisms remain implicit.
	
	Temporal logic can therefore \emph{detect} responsibility violations, and it can support
	certification arguments, but it does not by itself provide the structural objects that the SRS
	requires for \emph{engineering} responsibility: explicit accumulation, explicit concurrent
	feedback loops, and explicit forbidden global configurations that are excluded by design.
	
	\subsection{Hybrid Automata and Control-Theoretic Envelopes: Powerful for Geometric Safety}
	
	Hybrid automata and control barrier functions provide mature tools for enforcing invariants in
	cyber-physical systems \cite{Alur2015,Ames2017}. They excel when state variables are continuous,
	dynamics are well-characterized, and unsafe regions can be expressed as geometric constraints.
	
	Socio-technical responsibility, however, is often not primarily geometric. The critical quantities
	in generative content and distribution systems---exposure concentration, moderation backlog,
	endogenous-data fraction, amplification rate, procedural reliance, loss of recourse---are discrete,
	count-based, and history-sensitive. While hybrid models can approximate such quantities through
	continuous surrogates, doing so typically:
	(i) obscures interpretability (what exactly is the state variable measuring?),
	(ii) weakens auditability (signals become aggregated and indirect),
	and (iii) complicates structural intervention (governance becomes parameter tuning rather than
	execution restructuring).
	
	Hybrid models can thus capture \emph{envelopes} and barrier conditions for certain components of
	responsible operation, but they do not natively express accumulation and concurrency as first-class
	execution elements in the way needed for reachability-based socio-technical constraints.
	
	\subsection{Event Structures, Process Algebras, and Queueing Models: Partial Fits}
	
	Other event-based formalisms---process algebras, event structures, actor models, and queueing
	networks---can represent concurrency and interaction well. Queueing models, in particular, can
	represent backlog growth, saturation, and service limits, which are central to auditing and
	oversight capacity.
	
	Their limitation for SRS is typically one of \emph{reachability semantics and enforcement}. Many of
	these formalisms do not come with a unified, standard reachability theory that supports forbidden
	global configurations as first-class, analyzable objects comparable to forbidden markings in Petri
	nets. They can be highly effective as \emph{components} of a responsibility analysis pipeline, but
	they require additional structure to serve as the common execution semantics across SRS layers.
	
	\subsection{Why Petri Nets Are Minimal for the SRS}
	
	Petri nets occupy a distinct position among formal models of system execution.
	Their semantics are event-based, concurrent, and explicitly accumulative
	\cite{Murata1989,Reisig2013}. Tokens encode execution history. Places represent
	persistent socio-technical conditions such as capacity, reliance, backlog, or
	endogenous data formation. Transitions represent repeatable events whose frequency,
	ordering, and interaction matter. Feedback loops are explicit structural elements
	rather than emergent artifacts of state encoding.
	
	Crucially, Petri nets admit a well-developed reachability theory.
	Forbidden markings are non-compensable by construction: if a marking is unreachable,
	then no amount of optimization, learning, or incentive shaping can reach it
	\cite{Esparza1994}. This property aligns exactly with the SRS requirement that certain
	futures must be categorically excluded, not merely penalized or made unlikely.
	
	Petri nets are therefore minimal for the class of responsibility failures emphasized
	in this paper: failures that arise through accumulation, feedback, concurrency, and
	irreversibility under sustained operation. Other formalisms may approximate individual
	aspects of this behavior, but none make all four explicit and analyzable without
	substantial augmentation.
	
	\subsection{Formalism-Agnosticism of the Social Responsibility Stack}
	
	Crucially, the SRS is formalism-agnostic at the architectural level. Any formalism that satisfies
	the four criteria above could, in principle, serve as its execution semantics. In practice, most
	alternatives satisfy at most two without significant augmentation.
	
	Petri nets are adopted here not because they are fashionable, but because they are minimal for
	making accumulation, concurrency, feedback, irreversibility, and reachability simultaneously
	explicit and analyzable. They make responsibility an invariant of execution rather than a byproduct
	of optimization.
	
	The SRS thus remains formalism-agnostic in principle, while Petri nets provide a concrete
	instantiation that demonstrates how responsibility can be engineered, enforced, audited, and
	governed under learning and scale.

	\subsection{Petri Nets as the Execution Semantics of the Social Responsibility Stack}
	
	The Social Responsibility Stack is an architectural commitment to governing adaptive
	AI systems through constraints on execution rather than prescriptions of behavior.
	For this commitment to be operational, responsibility must be embedded at \emph{design
		time} and enforced at \emph{execution time}.
	
	Petri nets provide the mechanism by which this coupling is achieved. At design time,
	the net structure encodes admissible and inadmissible futures through places,
	transitions, guards, and forbidden markings. At execution time, system behavior is
	restricted to firing sequences permitted by that structure. Responsibility is thus not
	checked after the fact, nor optimized as a secondary objective; it is carried forward
	continuously by the execution semantics themselves.
	
	Within the SRS, Petri nets do not function as an auxiliary modeling tool or a post-hoc
	verification device. They constitute the shared execution substrate through which value
	commitments, impact modeling, safeguards, auditing, and governance interact. Learning
	selects among enabled transitions. Control adjusts parameters and rates. Governance
	legitimately revises structure. At no point is responsibility external to execution
	\cite{Basir2025}.
	
	Because Petri nets model concurrency, accumulation, and feedback explicitly, they provide a
	semantics in which responsibility remains meaningful under parallel adaptation, sustained
	operation, and scale \cite{Murata1989,Reisig2013}. Table 1 summarizes how each layer of the Social Responsibility Stack is realized as a structural component of Petri-net execution semantics. For clarity, sensing is treated as a cross-cutting capability rather than a responsibility layer.
	
	\begin{table}[H]
		\begin{center}
			\begin{tabular}{@{}p{0.28\linewidth}p{0.62\linewidth}@{}}
				\toprule
				\textbf{SRS Layer} & \textbf{Petri-Net Interpretation} \\
				\midrule
				Value Commitments (Value Grounding) &
				Forbidden markings / forbidden places; admissible-marking set $\mathcal{M}_{\mathrm{adm}}$ \\
				Socio-Technical Impact Modeling &
				Critical paths, cycles, siphons/traps; structural vulnerability analysis on the reachability graph \\
				Safeguards and Structural Constraints &
				Guards, inhibitor/permit arcs, capacity places, rate-limit subnets; supervisor constraints on firing \\
				Behavioral Feedback Interfaces &
				Observer/mediator subnets that shape exposure and coupling; interface transitions that modulate loop gain \\
				Monitoring and Auditing &
				Counters, clocks, thresholds, alarms; online estimation of boundary proximity / ``reachability pressure'' \\
				Governance &
				Mode switches / policy places; legitimate structural edits that revise constraints and admissibility \\
				\midrule
				Sensing (cross-cutting, not a layer) &
				State exposure that feeds guards/audits/governance with marking-derived observables \\
				\bottomrule
			\end{tabular}
		\end{center}
		\caption{Petri nets as the execution-level substrate of the Social Responsibility Stack}
		\label{tab:srs-petri-mapping}
	\end{table}
	
	This mapping makes explicit that the SRS is not enforced by optimizing better objectives or by
	adding external oversight after deployment. Instead, it is enforced by shaping the space of
	reachable executions shared across all layers. Value grounding specifies which markings must never
	be reached. Impact modeling reveals how ordinary operation may approach those boundaries.
	Safeguards restructure execution to block such paths. Auditing observes pressure on the
	boundaries. Governance retains authority over the execution structure itself.
	
	Petri nets are therefore not an implementation detail of the SRS. They constitute its execution
	semantics. Through them, responsibility becomes a first-class engineering property: not a
	guideline, not a metric, and not an aspiration, but a structural constraint on what the system is
	allowed to become.
	
	\section{Use Case: Closed-Loop Traffic Management as Responsible Cyber-Physical Execution}
	Urban traffic management systems are canonical cyber--physical systems in which
	algorithmic decision-making is tightly coupled to physical infrastructure,
	human behavior, and institutional constraints. Modern deployments increasingly
	use adaptive control, real-time sensing, and learning-based optimization to
	adjust signal timing, routing recommendations, and priority policies in response
	to evolving traffic conditions.
	
	The operational reality of these systems is not a single control decision,
	but continuous execution under congestion, feedback, and scale. Signal timing
	policies influence driver routing; routing alters congestion patterns; congestion
	reshapes sensed data; and learned controllers adapt based on these endogenous
	signals. Responsibility failures in such systems rarely arise from a single
	unsafe action. Instead, they emerge through cumulative execution trajectories
	that gradually produce globally unacceptable configurations such as gridlock,
	systematic inequity, or emergency vehicle starvation.
	
	Adaptive traffic control systems operate as closed-loop cyber--physical systems.
	Sensor measurements (loop detectors, cameras, vehicle telemetry) feed algorithmic
	controllers that actuate signal phases and routing advisories. These actions
	alter physical traffic flows, which in turn reshape subsequent measurements and
	learning signals.
	
	As in other safety-critical CPS, locally optimal control decisions can accumulate
	into globally unsafe or illegitimate configurations. A traffic system may satisfy
	all local safety constraints while still evolving toward states that undermine
	public safety, equity, or emergency response capacity.
	
	\paragraph{Execution cycle (CPS semantics).}
	We take one control interval as the atomic execution unit. In each interval, the system
	(i) senses queues and arrivals, (ii) applies a timing/routing control action, and
	(iii) induces a physical flow response that changes downstream queues, travel times, and
	subsequent sensing signals. Over repeated intervals, locally stabilizing actions can
	accumulate into regime shifts (persistent congestion patterns, policy lock-in, or
	priority starvation), making responsibility fundamentally a trajectory property rather
	than a single-step constraint.
	
	\paragraph{Places.}
	We define representative places capturing accumulated socio--technical state:
	
	\begin{itemize}
		\item $p_1$: Intersection queue capacity (vehicles accumulated at critical nodes)
		\item $p_2$: Active signal timing policy (current phase allocation in effect)
		\item $p_3$: Institutional reliance on adaptive control (degree of automation trust)
		\item $p_4$: Emergency vehicle priority capacity (ability to preempt signals)
		\item $p_5$: Driver route adaptation (population response to control policies)
		\item $p_6$: Endogenous congestion data (historical traffic shaped by prior control)
	\end{itemize}
	
	\paragraph{Transitions.}
	Representative transitions include:
	
	\begin{itemize}
		\item $t_1$: Sense traffic state and update controller inputs
		\item $t_2$: Apply adaptive signal timing or routing decision
		\item $t_3$: Codify learned policy into default control parameters
		\item $t_4$: Reduce manual oversight due to perceived controller reliability
		\item $t_5$: Driver population adapts routes and departure times
		\item $t_6$: Retrain or retune control policy on endogenous traffic data
	\end{itemize}
	\[
	p_2 \rightarrow t_2 \rightarrow p_3 \rightarrow t_4 \rightarrow p_4
	\rightarrow t_6 \rightarrow p_2
	\]

	\subsection{Responsibility as Forbidden Markings}
	Forbidden markings correspond to physically and institutionally inadmissible CPS
	configurations. A representative template is a marking in which queues and reliance
	have accumulated beyond oversight and emergency-response capacity, e.g.,
	\[
	M(p_1)\ge \bar q,\qquad M(p_3)\ge \bar r,\qquad M(p_4)\le \underline e,
	\]
	(optionally with $M(p_6)\ge \bar d$ indicating strong endogeneity).
	Such configurations capture persistent gridlock, emergency corridor starvation, or
	loss of effective human override. Responsibility requires excluding the corresponding
	set $\mathcal{M}_{\mathrm{bad}}$ from the reachable marking set.

	
	\begin{figure}[t]
		\centering
		\begin{tikzpicture}[
			font=\small,
			>=Stealth,
			place/.style={circle, draw, thick, minimum size=10mm, inner sep=0pt},
			place2/.style={circle, draw, thick, double, minimum size=10mm, inner sep=0pt},
			trans/.style={rectangle, draw, thick, minimum width=3.2mm, minimum height=12mm, fill=gray!10},
			arc/.style={-Stealth, thick},
			lab/.style={font=\scriptsize, align=center},
			dashedbox/.style={draw, dashed, thick, rounded corners=2pt, inner sep=6pt}
			]
			
			\node[place] (p1) at (-7.2,  1.6) {$p_1$};
			\node[lab]   at (-7.2,  0.45) {Queue\\capacity};
			
			\node[place] (p2) at (-3.5,  1.6) {$p_2$};
			\node[lab]   at (-3.5,  0.45) {Active\\policy};
			
			\node[place] (p3) at ( 0.2,  1.6) {$p_3$};
			\node[lab]   at ( 0.2,  0.45) {Institutional\\reliance};
			
			\node[place] (p4) at ( 3.9,  1.6) {$p_4$};
			\node[lab]   at ( 3.9,  0.45) {Emergency\\priority\\capacity};
			
			\node[place] (p5) at ( 3.9, -1.9) {$p_5$};
			\node[lab]   at ( 3.9, -2.9) {Driver\\route\\adaptation};
			
			\node[place] (p6) at ( 0.2, -1.9) {$p_6$};
			\node[lab]   at ( 0.2, -3.5) {Endogenous congestion data};
			
			\node[trans] (t1) at (-5.35, 1.6) {};
			\node[lab]   at (-5.35, 2.65) {$t_1$\\Sense/update};
			
			\node[trans] (t2) at (-1.65, 1.6) {};
			\node[lab]   at (-1.65, 2.65) {$t_2$\\Apply control};
			
			\node[trans] (t4) at ( 2.05, 1.6) {};
			\node[lab]   at ( 2.05, 2.65) {$t_4$\\Reduce oversight};
			
			\node[trans] (t3) at ( 0.2, 3.25) {};
			\node[lab]   at ( 0.2, 4.25) {$t_3$\\Codify defaults};
			
			\node[trans] (t6) at ( 1.95, -1.9) {};
			\node[lab]   at ( 1.95, -0.95) {$t_6$\\Retune/retrain};
			
			\node[trans] (t5) at ( 2.05, -1.9) {};
			\node[lab]   at ( 2.05, -0.5) {$t_5$\\Population adapts};
			
			\draw[arc] (p1) -- (t1);
			\draw[arc] (t1) -- (p2);
			
			\draw[arc] (p2) -- (t2);
			\draw[arc] (t2) -- (p3);
			
			\draw[arc] (p3) -- (t4);
			\draw[arc] (t4) -- (p4);
			
			\draw[arc] (p4) to[bend left=14] (t6);
			\draw[arc] (t6) -- (p2); 
			
			\draw[arc] (p3) to[bend right=18] (t6);
			\draw[arc] (t2) to[bend left=20] (p6); 
			\draw[arc] (p6) -- (t6);               
			
			\draw[arc] (p2) to[bend right=18] (t5);
			\draw[arc] (t5) -- (p5);
			\draw[arc] (p5) to[bend right=18] (t1); 
			
			\draw[arc] (p3) to[bend left=18] (t3);
			\draw[arc] (t3) to[bend left=18] (p2);
			
			\node[dashedbox, fit=(p2)(t2)(p3)(t4)(p4)(t6)(p6), 
			label={[lab]south:Reinforcing loop: reliance $\rightarrow$ oversight erosion $\rightarrow$ \\ endogenous retuning}] (loopBox) {};
			
			\node[lab, anchor=west] at (-7.9, 3.85) {\textbf{CPS semantics:} sensing $\rightarrow$ actuation $\rightarrow$ physical flow};
			\node[lab, anchor=west] at (-7.9, -4.0) {\textbf{Reachability risk:} repeated locally ``good'' firings accumulate into unacceptable markings};
			
		\end{tikzpicture}
		
		\caption{Petri-net execution semantics for adaptive urban traffic management (CPS).
			Places encode accumulated socio--technical conditions (queues, reliance, oversight capacity, endogenous data, and population adaptation);
			transitions encode repeatable events (sensing, control actuation, codification, oversight erosion, adaptation, and retraining).
			Responsibility is enforced by excluding forbidden markings (e.g., emergency starvation, persistent gridlock) from the reachable marking set.}
		\label{fig:traffic_petri}
	\end{figure}
	
	\section{Use Case: Automated Risk Scoring in Public Institutions}
	
	We now develop a concrete use case to show how responsibility, when formalized as
	reachability, becomes operational inside the Social Responsibility Stack (SRS) through
	Petri-net execution semantics. The setting is automated risk scoring in public-sector
	decision-making, including welfare eligibility triage, parole assessment support,
	immigration screening, fraud prioritization, and social-service allocation. These systems
	have been repeatedly analyzed as canonical examples of machine-mediated governance and
	accountability stressors in real institutions \cite{Raji2020,Selbst2019,Rahwan2019}.
	
	The operational reality of these deployments is not a single prediction followed by a
	single decision. It is a repeated workflow embedded in staffing limits, procedural
	deadlines, throughput constraints, and audit requirements. Institutional post-mortems and
	empirical studies show that the most consequential harms rarely arise from isolated
	mispredictions or explicit rule violations. They arise when the system becomes woven into
	the agency’s throughput machinery: queue discipline, override norms, appeal practices, and
	data collection routines shift over months of routine use. Such cumulative drift is a
	central theme in safety engineering and socio-technical analysis \cite{Leveson2011,Perrow1984}.
	
	\subsection{System Description as a Socio-Technical Closed Loop}
	
	Consider a public agency that introduces a learned risk scoring model
	\[
	r(x): \mathcal{X}_{\text{case}} \rightarrow \mathbb{R},
	\]
	where $x$ is a case profile assembled from administrative records (history of benefits,
	prior compliance events, employment records), behavioral signals (missed appointments,
	contact patterns), and historical outcomes used as labels. The agency formally deploys
	$r(x)$ as decision support rather than automated decision-making: the score is presented
	to an officer or caseworker alongside a short explanation and a recommended action band
	(e.g., \emph{low/medium/high}). This “advisory” framing, combined with nominal human
	authority, is standard in public-sector AI deployment narratives \cite{Raji2020,Selbst2019}.
	
	In practice, the score enters an operational pipeline with concrete constraints. Cases
	arrive continuously into a queue. Officers face service-level deadlines and throughput
	targets. Supervisors monitor productivity metrics, consistency across officers, and time-to-resolution.
	A score that appears stable and professionally integrated becomes a coordination device:
	it standardizes triage, compresses deliberation time, and reduces justification burden.
	Over time, officers learn which cases are ``safe'' to defer and which require escalation,
	and managerial policy learns which thresholds improve flow. Similar feedback structures
	have been identified as key drivers of emergent harm in complex socio-technical systems
	\cite{Rahwan2019,Leveson2011}.
	
	The resulting deployment is a closed loop, not a one-shot prediction:
	score $\rightarrow$ officer action $\rightarrow$ procedural routing and documentation
	$\rightarrow$ downstream outcomes $\rightarrow$ new data and labels $\rightarrow$ retraining and
	threshold revision $\rightarrow$ updated scoring behavior. Responsibility failures, when they occur,
	are emergent properties of this loop rather than isolated technical faults, reflecting the
	multi-causal drift mechanisms emphasized in organizational accident theory \cite{Perrow1984}.
	
	\subsection{Why Objective-Based Evaluation Appears Sufficient---and Is Not}
	
	From a standard optimization viewpoint, the deployment can look exemplary. Predictive
	accuracy improves on held-out data, calibration plots remain stable, and fairness
	regularization may reduce measured disparities in error rates or score distributions.
	The agency can point to written policy affirming human oversight and to procedural steps
	for appeals or review. These are precisely the indicators typically used to argue that a
	system behaves correctly under objective-based evaluation \cite{Russell2019}.
	
	However, these indicators fail to capture how the institution itself changes under sustained use. Over time, override rates decline not because officers are coerced, but because deferring to the system is faster and becomes normatively safe once supervisors treat the score as standard practice. Appeals diminish as affected individuals lose access to alternative evidentiary pathways, as documentation practices become score-conditioned, and as time-to-resolution is implicitly organized around score bands. Discretionary review erodes as staff and resources are reallocated from deliberation to throughput.
	
	Simultaneously, data becomes increasingly endogenous: prior system recommendations shape who is investigated, who is sanctioned, and which outcomes are recorded, and those records in turn become the training substrate for subsequent model iterations. Through this process, institutional practices and learning pipelines become mutually reinforcing, producing accountability erosion without any explicit policy change or measurable degradation in local performance. Such dynamics have been repeatedly documented across public-sector AI deployments and are widely recognized as systemic failure modes rather than isolated implementation flaws \cite{Raji2020,Leveson2011}.
	
	At no point must local performance degrade, and at no point must an explicit specification
	be violated. The failure is that the overall execution trajectory makes an unacceptable
	configuration reachable: an institution in which accountability exists on paper but cannot
	be exercised in practice. In formal terms, the failure is not optimization failure but
	admissible-execution failure: a reachability failure \cite{Baier2008}.
	
	\subsection{Petri-Net Representation of the Deployment}
	
	We now formalize the execution semantics of this deployment using a Petri net
	\[
	\mathcal{N} = (P, T, F, M_0),
	\]
	where places represent accumulated socio-technical conditions and transitions represent
	repeatable operational events that advance the pipeline. Petri nets are suited here because
	they make accumulation, concurrency, and feedback explicit rather than forcing them into
	memoryless transitions \cite{Peterson1981,Murata1989}.
	
	\paragraph{Places.}
	We define representative places:
	\begin{itemize}
		\item $p_1$: Human discretionary capacity available (time, training, authority to override)
		\item $p_2$: Algorithmic recommendation invoked (score present in the workflow)
		\item $p_3$: Institutional reliance on automated scores (thresholds, routings, norms)
		\item $p_4$: Oversight and review capacity (appeals staff, audit bandwidth, conference time)
		\item $p_5$: Population behavioral adaptation (strategic compliance, avoidance, changed reporting)
		\item $p_6$: Endogenous data accumulation (labels shaped by prior scoring-mediated actions)
	\end{itemize}
	
	Tokens represent \emph{accumulation} rather than binary flags. A token increase in $p_3$ can
	correspond to procedural codification (templates, memos, routing rules), managerial dashboards
	that implicitly reward alignment with recommendations, or procurement integration that makes the
	score operationally default. A token decrease in $p_1$ can represent time compression, reduced
	training, or cultural discouragement of overrides. These accumulation effects are central to
	socio-technical dynamics and are poorly captured by snapshot constraint checks \cite{Rahwan2019}.
	
	\paragraph{Transitions.}
	Representative transitions include:
	\begin{itemize}
		\item $t_1$: Generate and present risk score (score attached to case file)
		\item $t_2$: Human defers to recommendation (routing/decision consistent with score band)
		\item $t_3$: Procedure updated to codify usage (thresholds, routing rules, documentation templates)
		\item $t_4$: Oversight reduced due to perceived reliability (audit sampling reduced, staff reallocated)
		\item $t_5$: Population adapts behavior to scoring regime (changed reporting, avoidance, gaming)
		\item $t_6$: Model retrained on endogenous data (new version deployed, thresholds adjusted)
	\end{itemize}
	
	Each transition is locally rational, policy-compliant, and ethically defensible in isolation.
	The responsibility problem is not a single unethical step; it is the interaction over time,
	the same structural pattern emphasized in safety-critical system analysis \cite{Leveson2011}.
	Figure~\ref{fig:risk-scoring-petri} depicts the resulting execution semantics and makes the critical reinforcing cycle explicit as a reachability structure rather than a causal narrative.
	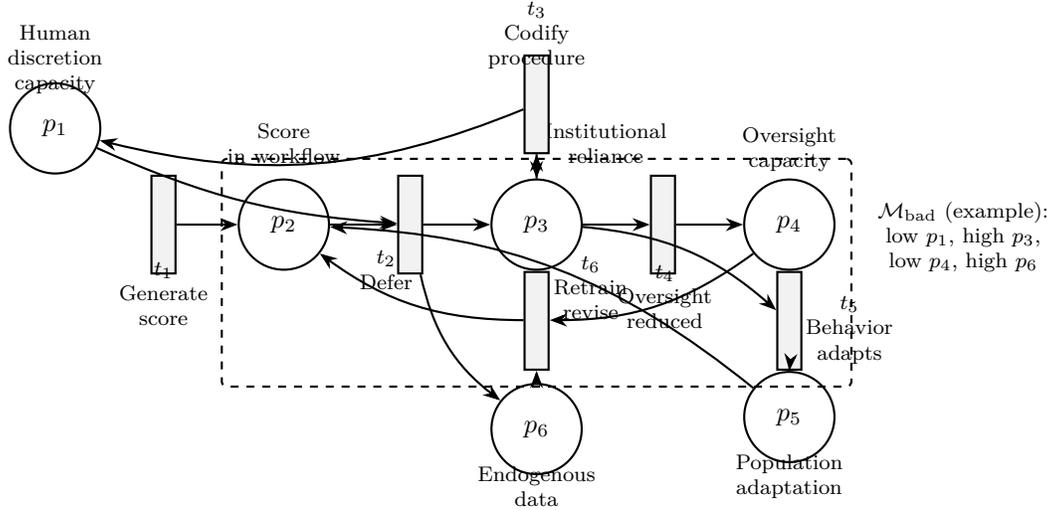
\begin{figure}[t]
		\centering
		\begin{tikzpicture}[scale=0.8,
			>=Stealth,
			font=\small,
			place/.style={circle, draw, thick, minimum size=12mm, inner sep=0pt},
			trans/.style={rectangle, draw, thick, minimum width=3.2mm, minimum height=13mm, fill=gray!10},
			arc/.style={-Stealth, thick},
			lab/.style={font=\scriptsize, align=center},
			dashedbox/.style={draw, dashed, thick, rounded corners=2pt, inner sep=6pt}
			]
			
			\node[place] (p1) at (-8.0,  1.6) {$p_1$};
			\node[place] (p2) at (-4.2,  0.0) {$p_2$};
			\node[place] (p3) at ( 0.0,  0.0) {$p_3$};
			\node[place] (p4) at ( 4.2,  0.0) {$p_4$};
			\node[place] (p6) at ( 0.0, -3.4) {$p_6$};
			\node[place] (p5) at ( 4.2, -3.2) {$p_5$};
			
			\node[lab] at (-8.0,  2.75) {Human\\discretion\\capacity};
			\node[lab] at (-4.2,  1.35) {Score\\in workflow};
			\node[lab] at ( 1.15,  1.35) {Institutional\\reliance};
			\node[lab] at ( 4.2,  1.25) {Oversight\\capacity};
			\node[lab] at ( 0.0, -4.35) {Endogenous\\data};
			\node[lab] at ( 4.2, -4.19) {Population\\adaptation};
			
			\node[trans] (t1) at (-6.2,  0.0) {};
			\node[trans] (t2) at (-2.1,  0.0) {};
			\node[trans] (t4) at ( 2.1,  0.0) {};
			\node[trans] (t6) at ( 0.0, -1.6) {};
			\node[trans] (t3) at ( 0.0,  2.0) {};
			\node[trans] (t5) at ( 4.2, -1.6) {};
			
			\node[lab] at (-6.2, -1.15) {$t_1$\\Generate\\score};
			\node[lab] at (-2.5, -0.75) {$t_2$\\Defer};
			\node[lab] at ( 2.13, -1.2) {$t_4$\\Oversight\\reduced};
			\node[lab] at ( 0.9, -1.05) {$t_6$\\Retrain \\revise};
			\node[lab] at ( 0.0,  3.15) {$t_3$\\Codify\\procedure};
			\node[lab] at ( 5.2, -1.75) {$t_5$\\Behavior\\adapts};
			
			\draw[arc] (t1) -- (p2);
			
			\draw[arc] (p2) -- (t2);
			\draw[arc] (t2) -- (p3);
			\draw[arc] (p3) -- (t4);
			\draw[arc] (t4) -- (p4);
			
			\draw[arc] (p4) to[bend left=18] (t6);
			\draw[arc] (t6) to[bend left=18] (p2);
			
			\draw[arc] (p1) to[bend right=10] (t2);
			
			\draw[arc] (p3) -- (t3);
			\draw[arc] (t3) -- (p3);
			\draw[arc] (t3) to[bend left=18] (p1);
			
			\draw[arc] (t2) to[bend right=20] (p6);
			\draw[arc] (p6) -- (t6);
			
			\draw[arc] (p3) to[bend left=18] (t5);
			\draw[arc] (t5) -- (p5);
			\draw[arc] (p5) to[bend right=18] (p2);
			
			\node[dashedbox, fit=(p2)(t2)(p3)(t4)(p4)(t6)] (loopbox) {};
			\node[lab, anchor=north] at ([yshift=-60pt]loopbox.south)
			{Reinforcing loop: reliance $\rightarrow$ oversight erosion $\rightarrow$ endogenous retraining};
			
			\node[lab, anchor=west] at (5.5, -0.2)
			{$\mathcal{M}_{\text{bad}}$ (example):\\ low $p_1$, high $p_3$,\\ low $p_4$, high $p_6$};
			
		\end{tikzpicture}
		\vspace{-0.2in}
		\caption{Petri-net execution semantics for public-sector risk scoring as a socio-technical closed loop.}
		\label{fig:risk-scoring-petri}
	\end{figure}
	
	\subsection{Feedback Cycles, Drift, and Practical Irreversibility}
	
	The critical structure lies in the feedback cycles. In particular, the loop
	\[
	p_2 \rightarrow t_2 \rightarrow p_3 \rightarrow t_4 \rightarrow p_4 \rightarrow t_6 \rightarrow p_2
	\]
	captures the mechanism by which adaptive control decisions increase institutional reliance,
	erode effective priority and oversight capacity, and feed back into subsequent controller
	retuning.
	
	In the Petri-net execution semantics, the retraining transition $t_6$ is additionally
	\emph{enabled by endogenous congestion data} $p_6$ and may be modulated by observed reliance
	levels $p_3$. This separation distinguishes the core cyber--physical control loop from the
	data-mediated adaptation pathway, while preserving a single reinforcing execution structure.
	
	This loop is not pathological by design; it becomes pathological only under repetition,
	scale, and unchecked accumulation. As reliance ($p_3$) increases, effective emergency
	priority capacity ($p_4$) is reduced: signal preemption becomes less responsive, manual
	intervention is deferred, and retuning proceeds on increasingly endogenous traffic data.
	Over time, the controller adapts to traffic patterns it has itself induced, reinforcing
	congestion regimes that are difficult to reverse without external intervention.
	
	Practical irreversibility is represented by accumulation in places such as $p_6$.
	Once congestion measurements and flow statistics have been shaped for long enough
	by the system’s own control-mediated actions, restoring a pre-deployment baseline
	is no longer a matter of “turning the controller off.” It requires external
	intervention: policy reset, signal reconfiguration, sensor recalibration, or
	institutional override of automated control authority. This mirrors the
	irreversibility of institutional and infrastructural transformations emphasized
	in organizational accident theory \cite{Perrow1984}.

	\subsection{Responsibility as Forbidden Markings}
	
	Let $\mathcal{M}$ denote the set of reachable markings of $\mathcal{N}$. Responsibility is
	specified by declaring forbidden markings
	\[
	\mathcal{M}_{\text{bad}} \subset \mathcal{M},
	\]
	such as configurations in which human discretion is functionally absent, institutional reliance
	exceeds review capacity, or endogenous-data accumulation overwhelms auditability. These are
	non-compensable boundaries on system evolution rather than preferences within it.
	
	Responsibility requires that
	\[
	\mathcal{M}_{\text{bad}} \cap \text{Reach}(\mathcal{N}) = \emptyset,
	\]
	which enforces responsibility structurally, in the same spirit as invariant-based approaches to
	safety and control \cite{Blanchini1999,Ames2017}.
	
	\subsection{Learning Under Structural Constraints}
	
	Adaptation in this deployment corresponds to discovering firing sequences that improve predictive
	utility and operational flow: new thresholds, new routing heuristics, new retraining schedules,
	and new documentation practices. Responsibility-preserving design does not suppress adaptation.
	Instead, it restricts which sequences are admissible so that improvement cannot be purchased by
	trajectory drift into illegitimate configurations.
	
	This distinction mirrors safety-critical practice, where adaptation is permitted only within
	envelopes that preserve invariance. Here, responsibility is enforced not by continual after-the-fact
	correction, but by shaping the space of possible socio-technical futures the institution--system
	coupling may inhabit.
	
	\subsection{Mapping to the Social Responsibility Stack}
	
	This use case maps directly onto the layers of the Social Responsibility Stack. Value grounding
	specifies forbidden markings corresponding to loss of legitimacy or loss of effective agency.
	Socio-technical impact modeling identifies feedback cycles, accumulation pathways, and critical
	reachability paths. Design-time safeguards restrict transitions that enable drift toward forbidden
	regions. Behavioral feedback interfaces observe token accumulation and transition activation
	patterns that signal trajectory drift. Continuous social auditing detects approach toward forbidden
	regions. Governance authorizes structural revision of the net when societal commitments or observed
	risk profiles change.
	
	Responsibility emerges not from any single mechanism, but from controlled reachability across the stack.
	
	\subsection{Interpretation}
	
	This use case shows that responsibility is not a property of isolated decisions, models, or
	objectives. It is a property of execution semantics: what the coupled institution or system is
	allowed to become through repeated operation. Petri nets provide a formalism in which this property
	can be defined, analyzed, and enforced without suppressing learning. The result is not merely safer
	optimization, but governable socio-technical evolution.
	
	\begin{figure}[t]
		\centering
		\begin{tikzpicture}[
			node distance=1.8cm,
			every node/.style={font=\small},
			block/.style={draw, rectangle, rounded corners, align=center, minimum width=3.6cm, minimum height=0.9cm},
			dashedblock/.style={draw, rectangle, rounded corners, dashed, align=center, minimum width=3.6cm, minimum height=0.9cm},
			arrow/.style={->, thick},
			]
			
			\node[block] (society) {Societal References\\
				{\footnotesize norms, laws, institutions}};
			
			\node[block, below of=society] (commitments) {Formalized Commitments\\
				{\footnotesize scoped, accountable}};
			
			\node[block, below of=commitments] (Rset) {Admissible Set $\mathcal{R}$\\
				{\footnotesize derived system constraints}};
			
			\node[block, below of=Rset, yshift=-0.2cm] (system) {Operational Execution\\
				{\footnotesize scoring, routing, retraining}};
			
			\node[dashedblock, right of=system, xshift=4.2cm] (observe) {Social Observability\\
				{\footnotesize behavior, trust, access signals}};
			
			\node[block, below of=observe] (supervise) {Supervisory Control\\
				{\footnotesize constraint tightening, escalation}};
			
			\draw[arrow] (society) -- (commitments);
			\draw[arrow] (commitments) -- (Rset);
			\draw[arrow] (Rset) -- (system);
			
			\draw[arrow] (system.east) -- node[above] {\footnotesize $y(t)$} (observe.west);
			\draw[arrow] (observe.south) -- (supervise.north);
			\draw[arrow] (supervise.west) -- node[below] {\footnotesize $u(t)$} (system.east);
			
			\node[align=left, below of=system, yshift=-0.9cm] (drift) {\footnotesize
				\textbf{Objective:} maintain trajectory invariance\\
				$x(t) \in \mathcal{R}$ under learning and interaction};
			
		\end{tikzpicture}
		\caption{Control-theoretic structure of the Social Responsibility Stack (SRS).
			Societal references are formalized into commitments, translated into an admissible
			set $\mathcal{R}$, and enforced through observability and supervisory control to
			maintain admissible trajectories $x(t)\in\mathcal{R}$ under learning and feedback.}
		\label{fig:srs_control_structure}
	\end{figure}
	
	Figure~\ref{fig:srs_control_structure} summarizes the closed-loop structure: execution produces social signals $y(t)$, and supervisory control applies interventions $u(t)$ to preserve admissibility under sustained operation.
	
	\section{Conclusion}
	
	Responsibility in adaptive socio-technical systems is not a question of intent,
	optimization quality, or post-hoc correction. It is a question of which system
	configurations are reachable under continued operation. Systems fail responsibly
	not because designers intend harm or because objectives are poorly tuned, but
	because execution semantics permit trajectories that gradually exit the region
	of social admissibility.
	
	This paper has argued that treating responsibility as an objective, a preference,
	or an external governance mechanism is structurally insufficient. Optimization
	reasons locally, while oversight reacts episodically. Neither can constrain the
	long-run evolution of systems embedded in feedback-rich environments where harm
	arises through accumulation, interaction, and normalization rather than discrete
	violation.
	
	By formalizing responsibility as a reachability problem, we shift the locus of
	governance from outcome evaluation to trajectory control. The relevant question
	becomes not whether individual decisions are acceptable, but whether the system’s
	execution semantics permit entry into irreversible regimes such as institutional
	lock-in, loss of human agency, erosion of legitimacy, or concentration of power.
	This reframing aligns responsibility with established engineering practice in
	safety-critical domains, where unacceptable states are rendered unreachable by
	design rather than penalized after failure.
	
	Within this framing, Petri nets emerge as a minimal and sufficient execution-level
	formalism for governing responsible behavior. Their explicit treatment of
	concurrency, accumulation, feedback, and irreversibility allows socio-technical
	dynamics to be represented without collapsing them into memoryless transitions or
	scalar objectives. Forbidden markings provide a precise formal expression of
	non-negotiable social commitments, while reachability analysis ensures that
	adaptation remains confined within admissible regions.
	
	Crucially, this approach does not oppose learning, optimization, or innovation.
	It disciplines them. Adaptation remains free to explore within the responsible
	region, discovering improved policies, representations, and strategies. What is
	excluded are destabilizing modes of operation whose harms arise not from isolated
	actions but from cumulative structural effects. Responsibility, in this sense, is
	not a brake on capability but a condition for its sustainable deployment.
	
	The implications for the Social Responsibility Stack are direct. Value grounding
	specifies inadmissible futures. Socio-technical impact modeling identifies the
	paths by which those futures may be approached. Design-time safeguards restrict
	execution semantics. Auditing monitors proximity to responsibility boundaries.
	Governance revises execution structure when societal commitments evolve.
	Responsibility emerges from the coordination of these layers through controlled
	reachability rather than through any single mechanism acting in isolation.
	
	More broadly, this work suggests a shift in how responsible AI should be
	engineered. The central challenge is no longer how to encode values into
	objectives, but how to design systems whose execution semantics preserve
	legitimacy under scale, feedback, and adaptation. Petri nets provide one concrete
	instantiation of this principle; the underlying requirement is more general.
	Responsibility must be internalized as a structural property of system evolution.
	
	In socio-technical systems, what matters is not only what systems do, but what
	they are allowed to become. Engineering responsibility therefore means
	engineering the space of possible futures. This paper provides a formal step in
	that direction.

\end{document}